\newtheorem{proposition}{Proposition}
\newtheorem*{theorem*}{Theorem}
\newtheorem*{lemma*}{Lemma}
\newtheorem*{proposition*}{Proposition}
\title{Gosper's algorithm and Bell numbers}
\author{Robert Dougherty-Bliss\thanks{Rutgers University, Department of Mathematics}}
\date{\today}
\begin{document}

\maketitle

\begin{abstract}
    \noindent Computers are good at evaluating finite sums in closed form, but
    there are finite sums which do not have closed forms. Summands which do not
    produce a closed form can often be ``fixed'' by multiplying them by a
    suitable polynomial. We provide an explicit description of a class of such
    polynomials for simple hypergeometric summands in terms of the Bell
    numbers.
\end{abstract}

\noindent Evaluating the partial sums of sequences which are products and
quotients of polynomials, binomial coefficients, factorials, and so on is a
major theme in combinatorics, discrete probability, and computer science. The
main tool in this area is Bill Gosper's marvelous hypergeometric summation
algorithm \cite{gosper}.

A sequence $f(k)$ is \emph{hypergeometric} (or a \emph{hypergeometric term})
provided that the consecutive quotient $f(k + 1) / f(k)$ is a rational function
in $k$. Gosper's algorithm completely solves the problem of hypergeometric
summation in one variable. It constructively determines when $\sum_k f(k)$
itself is hypergeometric term. We call such hypergeometric terms ``Gosper
summable.''

Unfortunately, many hypergeometric terms are \emph{not} Gosper summable. For
example, we cannot fill in the following blank with a hypergeometric term:
\begin{equation*}
    \sum_{k = 0}^n \frac{1}{k!} = \underline{\hspace{2in}}.
\end{equation*}
However, an upshot of Gosper's algorithm is that we can often \emph{tweak}
non-Gosper summable terms to make them summable. For example, while $1 / k!$ is
not Gosper summable, the term $(k - 1) / k!$ \emph{is}. In fact, lots of
multiples of $1 / k!$ are Gosper summable:
\begin{align*}
    \sum_{k = 0}^n \frac{k - 1}{k!} &= -\frac{1}{n!} \\
    \sum_{k = 0}^n \frac{k^2 - 2}{k!} &= -\frac{n + 2}{n!} \\
    \sum_{k = 0}^n \frac{k^3 - 5}{k!} &= -\frac{n^2 + 3n + 5}{n!} \\
    \sum_{k = 0}^n \frac{k^4 - 15}{k!} &= -\frac{n^3 + 4n^2 + 9n + 15}{n!}.
\end{align*}
This list suggests that there exists a sequence of integers $b(d)$---beginning
$1, 2, 5, 15$---such that $(k^d - b(d)) / k!$ is Gosper summable. This turns
out to be true. Even better, $b(d)$ turns out to be the $d$th Bell number, the
number of partitions of $d$ elements into any number of nonempty subsets.

Indeed, there is a similar statement for hypergeometric terms of the form $z^k
a^{\overline{k}}$ and $z^k / a^{\overline{k}}$ for constant $z$, where
$a^{\overline{k}} = a (a + 1) \cdots (a + k - 1)$ is the \emph{rising
factorial}. Specifically, there are explicit exponential generating functions
$g_{a, z}(x)$ and $f_{a, z}(x)$ such that $(k^d - c(d)) z^k a^{\overline{k}}$
is Gosper summable iff $c(d)$ is the coefficient on $x^d / d!$ in $g_{a,
z}(x)$, and the analogous statement for $z^k / a^{\overline{k}}$ and $f_{a,
z}(x)$. These generating functions happen to be related to the famous
exponential generating function for the Bell numbers $B(x) = e^{e^x - 1}$. Our
goal is to explain and prove these facts.

The remainder of this paper is organized as follows. Section~\ref{sec:gosper}
gives a quick overview of Gosper's algorithm. Section~\ref{sec:summability}
establishes the summability results and gives the explicit generating
functions. Section~\ref{sec:explicit} shows how to explicitly evaluate a
special case of these sums in terms of well-known integer sequences.
Section~\ref{sec:bell} explains how these generating functions are related to
the Bell numbers.

\section{Gosper's algorithm}%
\label{sec:gosper}

This section provides a brief overview of Gosper's algorithm. For more details,
see \cite{gosper} or \cite{ab}.

A sequence $f(k)$ is \emph{hypergeometric}, or a \emph{hypergeometric term},
provided that $f(k + 1) / f(k)$ is a rational function in $k$.

Every rational function $R(k)$ can be decomposed as
\begin{equation*}
    R(k) = \frac{a(k)}{b(k)} \frac{c(k + 1)}{c(k)},
\end{equation*}
where $a$, $b$, and $c$ are polynomials in $k$ which satisfy $\gcd(a(k), b(k +
i)) = 1$ for all nonnegative integers $i$. This is called the polynomial normal
form of $R(k)$. If $f(k)$ is hypergeometric and $f(k + 1) / f(k)$ has
polynomial normal form
\begin{equation*}
    \frac{f(k + 1)}{f(k)} = \frac{a(k)}{b(k)} \frac{c(k + 1)}{c(k)},
\end{equation*}
then we call $a / b$ the \emph{kernel} of $f$, and $c$ the \emph{shell} of $f$.
Note that
\begin{equation*}
    f(k) = z c(k) \prod_{j = 0}^{k - 1} (a(j) / b(j))
\end{equation*}
for some constant $z$. For this reason, we sometimes call $c(k)$ the
``polynomial part'' of $f(k)$ and the remaining product the ``purely
hypergeometric part.''

Gosper's algorithm amounts to the following theorem.

\begin{theorem*}
    The hypergeometric term $f(k)$ with polynomial normal form $(a, b, c)$ is
    Gosper summable if and only if there is a polynomial solution $x(k)$ to
    \begin{equation}
        \label{gosperEqn}
        x(k + 1) a(k) - x(k) b(k - 1) = c(k).
    \end{equation}
    In that case,
    \begin{equation*}
        \sum_k f(k) = \left( \frac{x(k) b(k - 1)}{c(k)} \right) f(k).
    \end{equation*}
\end{theorem*}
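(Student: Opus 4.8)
The plan is to prove the two implications separately. The ``if'' direction is a direct verification: given a polynomial $x(k)$ satisfying $x(k+1)a(k) - x(k)b(k-1) = c(k)$, set $g(k) = \frac{x(k)b(k-1)}{c(k)}f(k)$ and check that $g(k+1) - g(k) = f(k)$. Substituting $f(k+1) = \frac{a(k)c(k+1)}{b(k)c(k)}f(k)$ (the factorization supplied by the polynomial normal form), the factors $c(k+1)$ and $b(k)$ cancel and the numerator collapses to $x(k+1)a(k) - x(k)b(k-1) = c(k)$, giving $g(k+1) - g(k) = f(k)$ exactly. As a rational multiple of $f$, the term $g$ is itself hypergeometric, so $f$ is Gosper summable with the claimed closed form.

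For the ``only if'' direction, suppose $g(k+1) - g(k) = f(k)$ for a hypergeometric term $g$. The first step is to show the ratio $y(k) = g(k)/f(k)$ is a \emph{rational} function: being a ratio of hypergeometric terms it is at least hypergeometric, and from $g(k+1)/g(k) = (y(k)+1)/y(k)$ we get $y(k+1)/y(k) = \frac{y(k)+1}{y(k)}\cdot\frac{f(k)}{f(k+1)}$, where the left-hand side and the last factor are rational, forcing $1/y(k)$, hence $y(k)$, to be rational. The second step is the substitution $x(k) = y(k)c(k)/b(k-1)$: inserting $y(k) = x(k)b(k-1)/c(k)$ into the identity $y(k+1)\frac{f(k+1)}{f(k)} = 1 + y(k)$ and clearing denominators turns it into precisely Gosper's equation $x(k+1)a(k) - x(k)b(k-1) = c(k)$, with $x$ so far known only to be rational.

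It remains to prove the algebraic lemma that any rational solution $x(k) = p(k)/q(k)$ (in lowest terms, $q$ monic) of Gosper's equation must have $q$ constant. Clearing denominators gives $p(k+1)q(k)a(k) - p(k)q(k+1)b(k-1) = c(k)q(k)q(k+1)$, and reducing this modulo $q(k+1)$ and modulo $q(k)$ while using $\gcd(p,q) = 1$ yields $q(k+1) \mid q(k)a(k)$ and $q(k) \mid q(k+1)b(k-1)$. If $q$ had a root $\alpha$, then the second divisibility makes $\alpha+1$ a root of $q$ or else $b(\alpha-1) = 0$, and the first makes $\alpha - 1$ a root of $q$ or else $a(\alpha-1) = 0$; since $q$ has only finitely many roots, the chain of consecutive roots of $q$ through $\alpha$ terminates at both ends, and tracking those endpoints produces a number $\mu$ with $a(\mu) = 0$ together with an integer $i \ge 0$ with $b(\mu + i) = 0$. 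Then $(k - \mu)$ divides both $a(k)$ and $b(k+i)$, contradicting $\gcd(a(k), b(k+i)) = 1$ (including the case $i = 0$). Hence $q$ is constant, $x$ is a polynomial, and the proof is complete. I expect this last step---the root-chasing argument, with its finitely-many-roots termination and the bookkeeping of multiplicities---to be the main obstacle; the rationality reduction and the verification in the ``if'' direction are routine.
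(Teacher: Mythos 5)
Your proposal is correct, but note that the paper does not prove this statement at all: it is Gosper's theorem, quoted as background with pointers to \cite{gosper} and \cite{ab}, and the paper's own work begins only with the propositions of Section~\ref{sec:summability}. What you have written is essentially the classical proof from those references. The ``if'' direction and the closed form are exactly the standard verification; the reduction of the ``only if'' direction to rationality of the certificate $y(k) = g(k)/f(k)$ is also standard, and can even be shortened: with $s(k) = g(k+1)/g(k)$ rational, the relation $f(k) = g(k+1) - g(k) = (s(k) - 1)\,g(k)$ gives $y(k) = 1/(s(k) - 1)$ in one line, the degenerate case $s \equiv 1$ forcing $f \equiv 0$. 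Your final lemma---that a rational solution $p/q$ of the Gosper equation in lowest terms must have constant denominator---is the same root-chasing argument used by Gosper and in \emph{A=B}, and it rests exactly where you place it, on the normal-form condition $\gcd(a(k), b(k+i)) = 1$ for all nonnegative integers $i$. Two small remarks: the multiplicity bookkeeping you flag as the main obstacle is not actually needed, since the divisibilities $q(k+1) \mid q(k)a(k)$ and $q(k) \mid q(k+1)b(k-1)$ only need to be evaluated on the set of roots; taking the maximal run of consecutive roots $\gamma, \gamma+1, \dots, \beta$ through a root $\alpha$ gives $a(\gamma - 1) = 0$ and $b(\beta - 1) = 0$, hence a common root of $a(k)$ and $b(k + i)$ with $i = \beta - \gamma \ge 0$, which is the desired contradiction. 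Also, you silently assume $f$ and $g$ are not identically zero and do not vanish where you divide; this is the usual harmless caveat and worth a sentence in a polished write-up.
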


For example, the term ratio of $f(k) = 1 / k!$ has polynomial normal form
\begin{equation*}
    \frac{f(k + 1)}{f(k)} = \frac{1}{k + 1}
\end{equation*}
with $(a, b, c) = (1, k + 1, 1)$. Therefore $f(k)$ is summable if and only if
$x(k + 1) - x(k) k = 1$ has a polynomial solution $x(k)$, which it does not. On
the other hand, the term ratio of $g(k) = (k - 1) / k!$ has polynomial normal
form
\begin{equation*}
    \frac{g(k + 1)}{g(k)} = \frac{1}{k + 1} \frac{k}{k - 1}
\end{equation*}
with $(a, b, c) = (1, k + 1, k - 1)$. Therefore $g(k)$ is summable if and only
if $x(k + 1) - x(k) k = k - 1$ has a polynomial solution $x(k)$, which it does,
namely $x(k) = -1$. In addition,
\begin{equation*}
    \sum_k \frac{k - 1}{k!} = -\frac{k}{k!} = -\frac{1}{(k - 1)!}.
\end{equation*}

\section{Summability}%
\label{sec:summability}

For pedagogical purposes, let us first prove the following proposition.

\begin{proposition}
    The term
    \begin{equation*}
        \frac{k^d - b(d)}{k!}
    \end{equation*}
    is Gosper summable if and only if $b(d)$ is the $d$th Bell number.
\end{proposition}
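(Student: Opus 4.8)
The plan is to run Gosper's algorithm on $f(k) = (k^d - \beta)/k!$, writing $\beta$ for the unknown constant called $b(d)$ in the statement, and to determine for which $\beta$ the algorithm succeeds. First I would compute the polynomial normal form of the term ratio: since
\begin{equation*}
    \frac{f(k + 1)}{f(k)} = \frac{(k + 1)^d - \beta}{k^d - \beta} \, \frac{1}{k + 1},
\end{equation*}
the term ratio has polynomial normal form $(a, b, c) = (1,\ k + 1,\ k^d - \beta)$, so Gosper's equation~\eqref{gosperEqn} becomes
\begin{equation*}
    x(k + 1) - k\, x(k) = k^d - \beta .
\end{equation*}
Thus the proposition is equivalent to the claim that this equation has a polynomial solution $x(k)$ exactly when $\beta$ is the $d$th Bell number $B_d$.

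Second, I would show that at most one $\beta$ can work, by a dimension count. Let $L$ be the linear map $x(k) \mapsto x(k + 1) - k\, x(k)$ on polynomials. Inspecting leading terms shows that $L$ raises the degree of a nonzero polynomial by exactly one; hence $L$ is injective and carries the $d$-dimensional space of polynomials of degree at most $d - 1$ isomorphically onto a subspace $W$ of codimension one in the $(d + 1)$-dimensional space of polynomials of degree at most $d$. The same leading-term bookkeeping shows $Lx = 1$ has no polynomial solution, so the constant polynomial $1$ lies outside $W$; therefore every polynomial of degree at most $d$ is uniquely the sum of an element of $W$ and a scalar. The scalar attached to $k^d$ in this decomposition is the unique $\beta$ with $k^d - \beta \in W$, i.e.\ the unique $\beta$ for which $f$ is Gosper summable, and in particular such a $\beta$ exists.

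Third, I would identify that $\beta$ with $B_d$. When the equation is solvable, Gosper's formula gives the antidifference $x(k)/(k - 1)!$, which telescopes to
\begin{equation*}
    \sum_{k = 0}^{n} \frac{k^d - \beta}{k!} = \frac{x(n + 1)}{n!},
\end{equation*}
and this tends to $0$ as $n \to \infty$ because $x$ is a polynomial. Hence $\sum_{k \ge 0} k^d / k! = \beta \sum_{k \ge 0} 1/k! = \beta e$, and comparison with Dobinski's formula $\sum_{k \ge 0} k^d / k! = e\, B_d$ forces $\beta = B_d$. (Equivalently, the exponential generating function $\sum_d \bigl( \tfrac{1}{e} \sum_k \tfrac{k^d}{k!} \bigr) \tfrac{x^d}{d!}$ collapses, via $\sum_k e^{kx}/k! = e^{e^x}$, to $\tfrac{1}{e} e^{e^x} = e^{e^x - 1}$, so $\beta$ is the coefficient of $x^d/d!$ in the Bell numbers' exponential generating function.)

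The normal-form computation and the two degree counts are routine; the one genuinely external ingredient is the evaluation of $\sum_k k^d / k!$, and that is where Dobinski's formula---or, obtained in the same way, the recurrence $\beta_{d + 1} = \sum_{j = 0}^{d} \binom{d}{j} \beta_j$---does the real work of tying the analytically defined $\beta$ to the partition-counting Bell number, so the ``hard part'' amounts to choosing which standard fact about $B_d$ to cite. A more self-contained but heavier alternative would expand $k^d$ in the falling-factorial basis $k(k - 1) \cdots (k - j + 1)$ using Stirling numbers of the second kind, solve Gosper's equation for each piece $k(k - 1) \cdots (k - j + 1) - 1$ separately, and recombine; then producing and verifying that explicit solution becomes the obstacle, and it reduces to a Stirling-number identity.
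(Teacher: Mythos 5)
Your argument is correct, and its first half is in substance the paper's: your operator $L\colon x(k)\mapsto x(k+1)-k\,x(k)$ is exactly the Gosper equation for the normal form $(1,\,k+1,\,\cdot\,)$, and your codimension-one/degree count is the linear-algebra rephrasing of the paper's observation that the summable numerators $p_d(k)=k^{d+1}-(k+1)^d$ have degrees $1,2,3,\dots$ and hence span a space containing, for each $d\ge 1$, a unique polynomial of the form $k^d-\beta$. Where you genuinely diverge is in identifying that $\beta$ with the Bell number. The paper stays algebraic: it performs the triangular change of basis from the $p_d$ to the polynomials $k^d+c(d)$, compares constant terms, and reads off the Bell recurrence $c(d+1)=\sum_j\binom{d}{j}c(j)$ with $c(0)=-1$. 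You instead use the explicit antidifference $x(n+1)/n!$ (your telescoping computation is right, including $F(0)=0$), let $n\to\infty$ to get $\beta=e^{-1}\sum_{k\ge 0}k^d/k!$, and invoke Dobinski's formula. Your route is shorter for this proposition and arguably more illuminating---it gives a reason for the Bell numbers' appearance, whereas the paper later remarks they ``probably appear by accident.'' What the paper's recurrence method buys is uniformity: it carries over verbatim to Propositions~2 and~3, including the term $z^k a^{\overline{k}}$, where the series diverges and no convergent Dobinski-type evaluation is available, and its change-of-basis matrix is reused in Section~3 for the explicit Gould-number evaluation; your analytic identification would extend to the convergent $z^k/a^{\overline{k}}$ case but not to the divergent one. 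Two small points worth making explicit if you write this up: state the degree argument showing any polynomial solution of $Lx=k^d-\beta$ has degree exactly $d-1$ (so summability really is membership in your $W$), and note that the paper's stated normal-form condition $\gcd(a(k),b(k+i))=1$ is satisfied by your choice $(1,\,k+1,\,k^d-\beta)$, which is what licenses applying Gosper's criterion with that $c$.
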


\begin{proof}
    The consecutive term ratios of $1 / k!$ are $1 / (k + 1)$, which has
    polynomial normal form $(1, k + 1, 1)$. Setting $x_d(k) = -k^d$ in
    \eqref{gosperEqn} shows that
    \begin{equation*}
        p_d(k) = k^{d + 1} - (k + 1)^d
    \end{equation*}
    is a sequence of polynomials such that $p_d(k) / k!$ is Gosper summable.
    Since the degree of $p_d(k)$ is exactly $d + 1$, these polynomials are
    linearly independent, and therefore form a basis for the set of \emph{all}
    polynomials $p(k)$ such that $p(k) / k!$ is Gosper summable. Our
    proposition amounts to the claim that $k^d + b(d)$ is a different basis for
    this space.

    The degrees of the $p_d(k)$ start at $1$ and increase by $1$ every step, so
    by subtracting appropriate multiples of previous terms, we can cancel every
    power of $k$ in $p_d(k)$ except the leading term and the constant. That is,
    there \emph{is} a basis of the form $k + c(1), k^d + c(2), \dots$, obtained
    by a linear operation on the $p_d(k)$. In particular, since
    \begin{equation*}
        p_d(k) = k^{d + 1} - \sum_j {d \choose j} k^j,
    \end{equation*}
    the correct multiples to subtract are as follows:
    \begin{equation}
        \label{basis1}
        k^{d + 1} + c(d + 1) = p_d(k) + \sum_{j > 0} {d \choose j} (k^j + c(j)).
    \end{equation}
    If we set $c(0) = -1$ and look at the constant term of both sides, we obtain
    \begin{equation*}
        c(d + 1) = \sum_j {d \choose j} c(j).
    \end{equation*}
    This implies $c(d) = -b(d)$, where $b(d)$ is the $d$th Bell number, since
    the Bell numbers satisfy the same recurrence and begin with $1$ rather than
    $-1$.
\end{proof}

The above outline carries over nearly verbatim to other simple hypergeometric
terms. A slight difference is that, most of the time, the sequence $c(d)$ is
not well-known, and we have to settle for an explicit exponential generating
function. The following propositions neatly summarize the results.

\begin{proposition}
    If $z \neq 0$ and $a$ is not a nonpositive integer, then the hypergeometric
    term $(k^d - c(d)) z^k / a^{\overline{k}}$ is Gosper summable if and only
    if
    \begin{equation*}
        c(d) = [x^d / d!] \exp(-z - (a - 1) x + z e^x) = [x^d / d!] f_{a, z}(x).
    \end{equation*}
\end{proposition}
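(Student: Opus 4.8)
The plan is to follow the proof of the preceding proposition almost verbatim. First I would compute the consecutive term ratio of $z^k / a^{\overline{k}}$, which is $z / (k + a)$; hence $p(k) z^k / a^{\overline{k}}$ has polynomial normal form $(z, k + a, p(k))$ for every polynomial $p(k)$, and Gosper's equation \eqref{gosperEqn} for the shell $p(k)$ becomes
\begin{equation*}
    x(k + 1) z - x(k)(k + a - 1) = p(k).
\end{equation*}
Substituting $x_d(k) = -k^d$ produces the polynomials
\begin{equation*}
    p_d(k) = k^{d + 1} + (a - 1) k^d - z (k + 1)^d, \qquad d \ge 0,
\end{equation*}
and $p_d(k) z^k / a^{\overline{k}}$ is Gosper summable for each $d$. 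Since $\deg p_d = d + 1$ exactly, the $p_d$ are linearly independent; and since the displayed equation above has no polynomial solution when $p$ is a nonzero constant (a short comparison of degrees), the space of polynomials $p$ with $p(k) z^k / a^{\overline{k}}$ Gosper summable contains no nonzero constant, so the $p_d$ are a basis for it. The hypotheses $z \ne 0$ and $a$ not a nonpositive integer enter only here, to guarantee that $z^k / a^{\overline{k}}$ is a genuine, nonzero hypergeometric term.

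Next I would perform the triangular change of basis. Because the degrees of the $p_d$ are $1, 2, 3, \dots$, subtracting suitable combinations of $p_0, \dots, p_{d - 1}$ cancels every power of $k$ in $p_d$ except the leading one and the constant; calling the resulting basis $k^{d} - c(d)$ with the convention $c(0) = 1$, and expanding
\begin{equation*}
    p_d(k) = k^{d + 1} + (a - 1 - z) k^d - z \sum_{0 \le j < d} \binom{d}{j} k^j,
\end{equation*}
one reads off the multipliers by matching coefficients of $k^j$ for $j \ge 1$ and then matches constant terms, using that $p_d$ has constant term $-z$ for $d \ge 1$. This yields the recurrence
\begin{equation*}
    c(d + 1) = z \sum_{j = 0}^{d} \binom{d}{j} c(j) + (1 - a) c(d), \qquad c(0) = 1.
\end{equation*}
I expect this bookkeeping to be the main obstacle: compared with the pure-factorial case there are two extra ingredients, namely the term $(a - 1) k^d$ and the factor $z$ on $(k + 1)^d$, so it is easy to drop a sign or misplace a $z$ before arriving at a recurrence clean enough to recognize.

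Finally I would match this recurrence with the exponential generating function $f_{a, z}(x) = \exp(-z - (a - 1) x + z e^x)$. Logarithmic differentiation gives $f_{a, z}'(x) = \bigl((1 - a) + z e^x\bigr) f_{a, z}(x)$; reading the coefficient of $x^d / d!$ on both sides, and noting that $e^x f_{a, z}(x)$ is the binomial convolution of $(1, 1, 1, \dots)$ with the coefficient sequence of $f_{a, z}$, reproduces exactly the recurrence above, while $f_{a, z}(0) = 1$ matches $c(0) = 1$. Hence $c(d) = [x^d / d!] f_{a, z}(x)$. Combining everything: for each $d$, the term $(k^d - c(d)) z^k / a^{\overline{k}}$ is Gosper summable because $k^d - c(d)$ lies in the linear span of the $p_j$, and no other constant works, since the difference of two such summable terms would exhibit a nonzero constant multiple of $z^k / a^{\overline{k}}$ that is Gosper summable, contradicting the basis property established above.
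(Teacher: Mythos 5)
Your proposal is correct and follows essentially the same route as the paper's proof: the same normal form $(z,\, k + a,\, \cdot)$, the same basis polynomials $p_d(k) = k^d(k + a - 1) - z(k+1)^d$ coming from $x_d(k) = -k^d$, the same triangular change of basis, and the same recurrence for $c(d)$ matched against $f_{a,z}$. The only differences are cosmetic: you keep the sign convention $c(0) = 1$ and verify the differential equation for $f_{a,z}$ instead of solving it, and you make explicit the ``no nonzero constant is summable'' and uniqueness points that the paper leaves implicit.
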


\begin{proof}
    The consecutive term ratios of $z^k / a^{\overline{k}}$ are $z / (a + k)$,
    so their polynomial normal form is $(z, a + k, 1)$. It follows that the
    sequence of polynomials
    \begin{equation*}
        p_d(k) = k^d (a + k - 1) - z (k + 1)^d
    \end{equation*}
    for $d = 0, 1, 2, \dots$ form a basis for the set of polynomials $p(k)$
    such that $p(k) z^k / a^{\overline{k}}$ is Gosper summable. It suffices to
    transform this basis by iteratively eliminating all powers of $k$ from
    $p_d(k)$ except its highest power and its constant term, then to show that
    the constant terms have the quoted exponential generating function.

    Note that
    \begin{equation*}
        p_d(k) = k^{d + 1} - (z + 1 - a) k^d - z \sum_{j < d} {d \choose j} k^j.
    \end{equation*}
    Therefore,
    \begin{equation*}
        k^{d + 1} + c(d + 1) = p_d(k) + (z + 1 - a)(k^d + c(d)) + z \sum_{0 < j < d} {d \choose j} (k^j + c(j)).
    \end{equation*}
    Comparing constant terms, we see that
    \begin{equation*}
        c(d + 1) = -z + (z + 1 - a) c(d) + z \sum_{0 < j < d} {d \choose j} c(j)
    \end{equation*}
    If we let $c(0) = -1$, then this becomes
    \begin{equation*}
        c(d + 1) = (1 - a) c(d) + z \sum_j {d \choose j} c(j).
    \end{equation*}
    If $C(x) = \sum_{d \geq 0} \frac{c(d)}{d!} x^d$ is the exponential
    generating function of $c(d)$, then the previous equation implies
    \begin{equation*}
        C'(x) = (1 - a) C(x) + z e^x C(x).
    \end{equation*}
    Solving this linear differential equation yields
    \begin{equation*}
        C(x) = -e^{-z - (a - 1) x + ze^x}.
    \end{equation*}
    Therefore $(k^d - c(d)) z^k / a^{\overline{k}}$ is Gosper summable if and
    only if $c(d)$ is the coefficient on $x^d / d!$ in $\exp(-z - (a - 1) x +
    ze^x)$.
\end{proof}

\begin{proposition}
    If $z \neq 0$ and $a$ is not a nonpositive integer, then the hypergeometric
    term $(k^d - c(d)) z^k a^{\overline{k}}$ is Gosper summable with respect to
    $k$ if and only if
    \begin{equation*}
        c(d) = [x^d / d!] \exp(z^{-1} - ax - z^{-1} e^{-x}) = [x^d / d!] g_{a, z}(x).
    \end{equation*}
\end{proposition}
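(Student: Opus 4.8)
The plan is to repeat the argument of the previous proposition, changing only the bookkeeping. First I would record that the consecutive term ratio of $z^k a^{\overline{k}}$ is $z(a + k)$ (since $a^{\overline{k + 1}} = (a + k) a^{\overline{k}}$), so its polynomial normal form is $(z(a + k), 1, 1)$, with kernel $z(a + k)$ and shell $1$. Multiplying by a polynomial $p(k)$ leaves the kernel fixed and replaces the shell by $p(k)$, so by Gosper's theorem $p(k) z^k a^{\overline{k}}$ is Gosper summable exactly when
\[
    z(a + k) x(k + 1) - x(k) = p(k)
\]
has a polynomial solution $x(k)$. Hence the polynomials $p$ for which $p(k) z^k a^{\overline{k}}$ is summable are precisely those of the form $z(a + k) x(k + 1) - x(k)$, and taking $x_d(k) = -k^d$ yields the family
\[
    p_d(k) = k^d - z(a + k)(k + 1)^d, \qquad d = 0, 1, 2, \dots,
\]
which has degree $d + 1$ with leading coefficient $-z \neq 0$, hence forms a basis for that space.

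Next, exactly as in the previous proof, I would exploit the strictly increasing degrees to transform $\{p_d\}$ into a basis $q_d(k) = k^d + \gamma_d$, one for each $d \geq 1$, in which every power of $k$ except the top one and the constant has been eliminated, obtained by subtracting off suitable multiples of earlier $q$'s; comparing constant terms then gives a linear recurrence for $\gamma_d$. Writing $a + k = (k + 1) + (a - 1)$ turns $p_d(k)$ into $k^d - z(k + 1)^{d + 1} - z(a - 1)(k + 1)^d$, and expanding in powers of $k$ produces, for $d \geq 1$,
\[
    \gamma_{d + 1} = a - \left( d + a - z^{-1} \right) \gamma_d - \sum_{0 < j < d} \left[ \binom{d + 1}{j} + (a - 1) \binom{d}{j} \right] \gamma_j.
\]
The step that takes real care --- and the main obstacle --- is the simplification of this: unlike in the previous proposition, the ``polluting'' term $z(a + k)(k + 1)^d$ here carries the leading coefficient, so eliminating it feeds into every lower coefficient at once. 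After applying Pascal's rule $\binom{d + 1}{j} = \binom{d}{j} + \binom{d}{j - 1}$ and setting $\gamma_0 = -1$, I expect the recurrence to collapse to the generating-function-friendly identity
\[
    \sum_{i = 0}^{d} \binom{d}{i} \gamma_{i + 1} = z^{-1} \gamma_d - a \sum_{j = 0}^{d} \binom{d}{j} \gamma_j,
\]
valid for every $d \geq 0$ (the case $d = 0$ is a consistency check: it reads $\gamma_1 = a - z^{-1}$, which agrees with reading off $q_1$ directly from $p_0$).

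Finally I would pass to the exponential generating function $C(x) = \sum_{d \geq 0} \gamma_d x^d / d!$. Since $\sum_i \binom{d}{i} \gamma_{i + 1}$ and $\sum_j \binom{d}{j} \gamma_j$ are the coefficients of $x^d / d!$ in $C'(x) e^x$ and $C(x) e^x$ respectively, the identity becomes the first-order linear differential equation $C'(x) e^x = z^{-1} C(x) - a C(x) e^x$, that is, $C'(x) = (-a + z^{-1} e^{-x}) C(x)$; integrating and imposing $C(0) = \gamma_0 = -1$ gives $C(x) = -\exp(z^{-1} - a x - z^{-1} e^{-x})$. To conclude, $(k^d - c(d)) z^k a^{\overline{k}}$ is Gosper summable iff $k^d - c(d)$ lies in the span of the $q_e$ for $e \geq 1$; since $\sum_m a_m k^m$ lies in that span iff $a_0 = \sum_{m \geq 1} a_m \gamma_m$, this forces $c(d) = -\gamma_d = [x^d / d!] \exp(z^{-1} - a x - z^{-1} e^{-x}) = [x^d / d!] g_{a, z}(x)$, the degenerate case $d = 0$ giving $c(0) = 1 = g_{a, z}(0)$, as it must.
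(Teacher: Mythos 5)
Your proposal is correct and follows essentially the same route as the paper: the normal form $(z(a+k),1,1)$, the basis $p_d(k)$ coming from $x_d(k)=\pm k^d$, the triangular reduction to a basis $k^d+\gamma_d$, the resulting binomial recurrence (your form, via Pascal's rule, is equivalent to the paper's $c(d) = z\sum_j\bigl(a\binom{d}{j}+\binom{d}{j-1}\bigr)c(j)$), and the same first-order ODE with $C(0)=-1$. Incidentally, your differential equation $C'(x)e^x = z^{-1}C(x) - aC(x)e^x$ is the correct one --- the paper's displayed equation $C(x)=z(e^xC(x)+e^xC'(x))$ drops the factor $a$, evidently a typo, since its stated solution satisfies your equation --- and your explicit treatment of the $d=0$ case and of the ``constant term matches'' criterion makes precise what the paper leaves implicit.
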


\begin{proof}
    The consecutive term ratio of $z^k a^{\overline{k}}$ has polynomial normal
    form $(z(a + k), 1, 1)$. Therefore, as in the proof of Proposition~2, the
    sequence of polynomials
    \begin{equation*}
        p_d(k) = z (k + 1)^d (a + k) - k^d
    \end{equation*}
    form a basis for the set of all polynomials $p(k)$ such that $p(k) z^k
    a^{\overline{k}}$ is Gosper summable, and our job is to simplify it.

    Note that
    \begin{equation*}
        p_d(k) = zk^{d + 1} + (z(a + d) - 1)k^d + z \sum_{j < d} \left(a {d \choose j} + {d \choose j - 1}\right) k^j.
    \end{equation*}
    Therefore, having constructed basis elements of the form $k + c(1)$, $k^2 +
    c(2), \dots, k^d + c(d)$, we have
    \begin{equation*}
        z(k^{d + 1} + c(d + 1)) = p_d(k) - (z(a + d) - 1)(k^d + c(d))
        - z \sum_{0 < j < d}\left(a{d \choose j} + {d \choose j - 1}\right)(k^j + c(j))
    \end{equation*}
    Comparing constant coefficients yields
    \begin{equation*}
        z c(d + 1) = az - (z(a + d) - 1) c(d) - z \sum_{0 < j < d}\left(a{d \choose j} + {d \choose j - 1}\right)c(j).
    \end{equation*}
    If we let $c(0) = -1$, then this becomes
    \begin{equation*}
        z c(d + 1) = c(d) - z \sum_{0 \leq j \leq d}\left(a {d \choose j} + {d \choose j - 1}\right)c(j).
    \end{equation*}
    If we move the sum to the left-hand side, the equation reads
    \begin{equation*}
        c(d) = z\sum_j \left(a{d \choose j} + {d \choose j - 1}\right) c(j).
    \end{equation*}
    If $C(x)$ is the exponential generating function of $c(d)$, then standard
    techniques give us
    \begin{equation*}
        C(x) = z (e^x C(x) + e^x C'(x)),
    \end{equation*}
    whose unique solution with $C(0) = -1$ is $C(x) = -\exp(z^{-1} - ax -
    z^{-1}e^{-x})$.
\end{proof}

\section{Explicit Formulas and Gould Numbers}
\label{sec:explicit}

In the previous section we proved that
\begin{equation*}
    \sum_k \frac{k^d - b(d)}{k!}
\end{equation*}
is Gosper summable when $b(d)$ is the $d$th Bell number. In this section we
will explicitly evaluate this sum in terms of well-known integer sequences.

Equation~\eqref{basis1} is essentially a change of basis equation. It tells us
how to express the polynomials $p_d(k)$ in terms of the polynomials $k^d -
b(d)$. The first basis, $p_d(k)$, has the benefit that
\begin{equation*}
    \sum_k \frac{p_d(k)}{k!} = -\frac{k^{d + 1}}{k!}.
\end{equation*}
So, if we could invert \eqref{basis1} and express $k^d - b(d)$ in terms of
$p_d(k)$, $p_{d - 1}(k)$, and so on, we could apply linearity to evaluate
$\sum_k \frac{k^d - b(d)}{k!}$.

Equation~\eqref{basis1} amounts to the following matrix identity:
\begin{equation}
    \label{matrix}
    \begin{bmatrix}
        p_0(k) \\
        p_1(k) \\
        p_2(k) \\
        \vdots
    \end{bmatrix}
    =
    \begin{bmatrix}
        1 & 0 & 0 & 0 & \cdots \\
        -1 & 1 & 0 & 0 & \cdots \\
        -2 & -1 & 1 & 0 & \cdots \\
        -3 & -3 & -1 & 1 & \cdots \\
           \vdots
    \end{bmatrix}
    \begin{bmatrix}
        k - b(1) \\
        k^2 - b(2) \\
        k^3 - b(3) \\
        \vdots
    \end{bmatrix}.
\end{equation}
The coefficient matrix is invertible. The first few rows of $A^{-1}$ are as
follows:
\begin{equation*}
    A^{-1} =
\left[\begin{array}{cccccc}
        1 & 0 & 0 & 0 & 0 & \cdots \\
        1 & 1 & 0 & 0 & 0 & \cdots \\
        3 & 1 & 1 & 0 & 0 & \cdots \\
        9 & 4 & 1 & 1 & 0 & \cdots \\
        31 & 14 & 5 & 1 & 1 & \cdots \\
        \vdots
\end{array}\right].
\end{equation*}
The OEIS \cite{oeis} suggests that the columns are the diagonals of A121207,
which is a table of values $T_{dj}$ defined by
\begin{equation*}
    T_{(d + 1)j} = \sum_{i = 0}^{d - j - 1} {r \choose i} T_{(d - i)j}.
\end{equation*}
This table is a special case of a family of tables studied by Gould and
Quaintance \cite{gould}. The numbers $T_{r1}$ are called the \emph{Gould
numbers} (see \href{https://oeis.org/A040027}{A040027}). This suggestion turns
out to be correct.

\begin{proposition}
    For any positive integer $d$,
    \begin{equation*}
        \sum_k \frac{k^d - b(d)}{k!} = -\frac{\sum_{j \geq 1} B_{dj} k^j}{k!},
    \end{equation*}
    where the matrix $B_{dj}$ is defined by $B_{dd} = 1$ and
    \begin{equation*}
        B_{(d + 1)j} = \sum_{k = 0}^{d - j} {d \choose k} B_{(d - k)j} \quad (d \geq j).
    \end{equation*}
    In particular, $B_{dj}$ is the $d$th element of the $(j - 1)$th diagonal of
    \href{https://oeis.org/A121207}{A121207}.
\end{proposition}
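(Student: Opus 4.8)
The plan is to invert the change-of-basis relation behind Equation~\eqref{matrix}. Taking Equation~\eqref{basis1} with $c(j) = -b(j)$, we have
\[
    k^{d+1} - b(d+1) = p_d(k) + \sum_{j=1}^{d} \binom{d}{j}\bigl(k^j - b(j)\bigr),
\]
and from Section~\ref{sec:explicit} we know $\sum_k p_j(k)/k! = -k^{j+1}/k!$. It is convenient to abbreviate $q_d(k) = k^d - b(d)$ and $r_d(k) = p_{d-1}(k)$; since $p_0(k) = k - 1 = k - b(1)$, the relation above reads
\[
    q_{d+1}(k) = r_{d+1}(k) + \sum_{j=1}^{d} \binom{d}{j}\, q_j(k), \qquad q_1(k) = r_1(k),
\]
while $\sum_k r_j(k)/k! = -k^j/k!$. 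Because antidifferencing is linear, it suffices to write each $q_d$ as an explicit integer combination $q_d(k) = \sum_{j=1}^{d} B_{dj}\, r_j(k)$ of the $r_j$: then $\sum_k q_d(k)/k! = \sum_{j} B_{dj}\sum_k r_j(k)/k! = -\bigl(\sum_{j \geq 1} B_{dj} k^j\bigr)/k!$, which is the claimed identity, and this also exhibits $B$ (suitably indexed) as the inverse of the matrix in Equation~\eqref{matrix}.

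First I would prove $q_d = \sum_{j=1}^{d} B_{dj}\, r_j$ by strong induction on $d$, with $B_{dj}$ exactly the matrix in the statement. The base case $d = 1$ is $q_1 = r_1 = B_{11} r_1$. For the inductive step, substitute the inductive hypothesis $q_j = \sum_{m=1}^{j} B_{jm} r_m$ (for $j \leq d$) into the recurrence for $q_{d+1}$ and interchange the two sums; the coefficient of $r_m$ becomes
\[
    \delta_{m,\,d+1} + \sum_{j=m}^{d} \binom{d}{j} B_{jm}.
\]
Reindexing the sum by $j = d - k$ rewrites it as $\sum_{k=0}^{d-m} \binom{d}{k} B_{(d-k)m}$, which is exactly $B_{(d+1)m}$ by the defining recurrence for $B$ (applicable since $m \leq d$), while the $\delta_{m,d+1}$ term records $B_{(d+1)(d+1)} = 1$. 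Hence $q_{d+1} = \sum_{m=1}^{d+1} B_{(d+1)m} r_m$, which closes the induction and, by the reduction of the previous paragraph, proves the sum formula.

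Finally, to identify $B_{dj}$ with \href{https://oeis.org/A121207}{A121207}: by construction the $j$th column of $B$ starts with $B_{jj} = 1$ and obeys the column recurrence $B_{(d+1)j} = \sum_{k=0}^{d-j} \binom{d}{k} B_{(d-k)j}$, so it remains only to check that the $(j-1)$th diagonal of A121207 satisfies the same initial value and recurrence once the row-indexed description of that table is translated into diagonal coordinates. I do not expect a genuine obstacle anywhere in this argument: once the relation is put in the $q$–$r$ form everything is forced, and the only thing to watch is index alignment — both in the reindexing $j = d - k$ and in matching the diagonals of the OEIS table to the columns of $B$.
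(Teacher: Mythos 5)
Your proposal is correct and follows essentially the same route as the paper: both invert the change-of-basis relation \eqref{basis1} (equivalently \eqref{matrix}), use $\sum_k p_j(k)/k! = -k^{j+1}/k!$ together with linearity, and show that the inverse coefficients satisfy exactly the recurrence $B_{(d+1)j} = \sum_{k=0}^{d-j}\binom{d}{k}B_{(d-k)j}$ with $B_{dd}=1$, identifying them with A121207. The only difference is cosmetic: you organize the inversion as a strong induction producing the expansion $k^d - b(d) = \sum_j B_{dj}\,p_{j-1}(k)$, whereas the paper verifies the matrix-inverse condition $\sum_k A_{(d+1)k}B_{kj} = [d+1=j]$ entrywise (noting $B_{dj}=0$ for $d<j$), which amounts to the same computation.
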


\begin{proof}
    The matrix in \eqref{matrix} is defined by
    \begin{equation*}
        A_{dj} = [d = j] - {d - 1 \choose j} [d \neq j].
    \end{equation*}
    For $B$ to be its matrix inverse, we must have
    \begin{equation}
        \label{inv}
        \sum_{k \geq 1} A_{(d + 1)k} B_{kj} = [d + 1 = j]
    \end{equation}
    for all integers $d \geq 0$ and $j \geq 1$. If we expand $A_{(d + 1)k}$,
    then this reads
    \begin{equation}
        \label{brec}
        B_{(d + 1)j} = \sum_{k \geq 1} {d \choose k} B_{kj} + [d + 1 = j].
    \end{equation}
    Note that this implies $B_{dj} = 0$ if $d < j$. Indeed, $B_{1j} = {0
    \choose 1} B_{1k} = 0$, and if $d < j - 1$, then we can apply induction to
    every term of the right-hand side of \eqref{brec} to conclude that $B_{(d +
    1)j} = 0$. Hence we can define $B_{dj}$ as follows:
    \begin{align*}
        B_{jj} &= 1 \\
        B_{(d + 1)j} &= \sum_{j \leq k \leq d} {d \choose k} B_{kj} \quad (d \geq j).
    \end{align*}
    This is \href{https://oeis.org/A121207}{A121207} shifted so that $j$ begins
    at $1$ rather than $0$.
\end{proof}

Written more concretely, this identity reads
\begin{equation*}
    \sum_{k = 0}^{n - 1} \frac{k^d - b(d)}{k!} = -\frac{\sum_{j \geq 1} B_{dj} n^j}{n!}.
\end{equation*}
If we multiply by $n!$ and rearrange things, we obtain the following equality
for the bell numbers, valid for $n \geq 1$ and $d \geq 0$:
\begin{equation}
    b(d)
    =
    \frac{\sum_{k = 0}^{n - 1} k^d n^{\underline{n - k}} + \sum_{j \geq 1} B_{dj} n^j}{\sum_{k = 0}^{n - 1} n^{\underline{n - k}}}.
\end{equation}
It seems plausible that this has a combinatorial proof, but the author does not
know one.

\section{Connections with Bell numbers}
\label{sec:bell}

The exponential generating functions from the previous section are
\begin{align*}
    f_{a, z}(x) &= \exp(-z - (a - 1) x + ze^x) \\
    g_{a, z}(x) &= \exp(z^{-1} - a x - z^{-1}e^{-x}).
\end{align*}
These functions, and therefore the underlying sequences, are connected with the
Bell numbers. In particular, if we let
\begin{equation*}
    B(x) = e^{e^x - 1} = \sum_{j \geq 0} \frac{b(d)}{d!} x^d
\end{equation*}
be the exponential generating function for the Bell numbers, then for integral
$z$ we have the following identities:
\begin{align}
    \label{feqn} f_{a, z}(x) &= e^{(1 - a) x} B(x)^z \\
    \label{geqn} g_{a, 1 / z}(x) &= e^{-ax} B(-x)^{-z}.
\end{align}
If $z$ is positive, the first equation says that the coefficients of $f_{1,
z}(x)$ are the binomial convolution of $(1 - a)^k$ with the convolution of the
Bell numbers with themselves $z$ times. If $z$ is negative, the second equation
says that the coefficients of $g_{1, 1/z}(x)$ are the binomial convolution of
$(-a)^k$ with the convolution of the alternating Bell numbers $(-1)^d b(d)$
with themselves $z$ times.

\paragraph{Examples for $z^k / a^{\overline{k}}$} Setting $a = z = 1$ in
\eqref{feqn}, we get $f_{1, 1}(x) = B(x)$. If we translate this into the
vocabulary of the previous section, this says that
\begin{equation*}
    \frac{k^d - b(d)}{k!}
\end{equation*}
is Gosper summable, and no other constants will work. Similarly, $f_{1, 2}(x) =
B(x)^2$, so
\begin{equation*}
    \frac{(k^d - c(d)) 2^k}{k!}
\end{equation*}
is Gosper summable only if $c(d) = \sum_j {d \choose j} b(d) b(d - j)$. This
sequence begins $2, 6, 22, 94$, corresponding to the following identities:
\begin{align*}
    \sum_{k = 0}^n \frac{(k - 2)2^k}{k!} &= -\frac{2^{n + 1}}{n!} \\
    \sum_{k = 0}^n \frac{(k^2 - 6)2^k}{k!} &= -\frac{(n + 3) 2^{n + 1}}{n!} \\
    \sum_{k = 0}^n \frac{(k^3 - 22)2^k}{k!} &= -\frac{(n^2 + 4n + 11) 2^{n + 1}}{n!} \\
    \sum_{k = 0}^n \frac{(k^4 - 94)2^k}{k!} &= -\frac{(n^3 + 5n^2 + 17n + 47)2^{n + 1}}{n!}.
\end{align*}
Setting $a = 1/2$ and $z = 1$ gives $g_{1/2, 1}(x) = e^{x/2} B(x)$, which says
that
\begin{equation*}
    \frac{k^d - c(d)}{(1/2)^{\overline{k}}} = (k^d - c(d)) \frac{4^k k!}{(2k)!}
\end{equation*}
is Gosper summable only if $c(d) = \sum_j {d \choose j} b(d) / 2^{d - j}$.

\paragraph{Examples for $z^k a^{\overline{k}}$}
The connection for $g_{1, 1/z}(x)$ is most convenient when $z$ is a negative
integer. Setting $z = -1$ in \eqref{geqn} gives $g_{1, -1}(x) = e^{-x} B(-x) =
B'(-x)$, which says that
\begin{equation*}
    (k^d - (-1)^d b(d + 1)) (-1)^k k!
\end{equation*}
is Gosper summable, and no constant except $(-1)^d b(d + 1)$ will work.
Similarly, $g_{1, -1/2}(x) = e^{-x} B(-x)^2 = B'(-x) B(-x)$. Therefore,
\begin{equation*}
    (k^d - c(d)) \frac{k!}{(-2)^k}
\end{equation*}
is Gosper summable only if $c(d) = (-1)^d \sum_j {d \choose j} b(j + 1) b(d -
j)$. For example,
\begin{equation*}
    \sum_{k = 0}^n \frac{(k^2 - 11) k!}{(-2)^k} = \frac{(n - 3)(n + 1)!}{(-2)^n} - 8.
\end{equation*}
Setting $a = 1/2$ and $z = -1$ gives $g_{1/2, -1}(x) = e^{-x/2} B(-x)$, so
\begin{equation*}
    (k^d - c(d)) (-1)^k (1/2)^{\overline{k}} = (k^d - c(d)) (-1)^k \frac{(2k)!}{4^k k!}
\end{equation*}
is Gosper summable only if $c(d) = (-1)^d \sum_j {d \choose j} b(d) / 2^{d -
j}$.

\section{Conclusion}

We have given some explicit conditions for the Gosper summability of
hypergeometric terms of the form
\begin{equation*}
    (k^d - c(d)) z^k a^{\overline{k}} \quad \text{and} \quad (k^d - c'(d)) \frac{z^k}{a^{\overline{k}}}.
\end{equation*}
Namely, $c(d)$ and $c'(d)$ must be the coefficients of explicit exponential
generating functions which are related to the Bell numbers. In the special case
of $1 / k!$, we gave an explicit evaluation of these sums in terms of the
inverse of a matrix involving binomial coefficients. The Bell numbers
\emph{probably} appear by accident. However, should some combinatorial
connection be made, the author would like to hear about it.

We have made use of Gosper's algorithm for hypergeometric summation, but there
is a continuous variant of Gosper's algorithm for hyperexponential integration
\cite{az}. We may be able to make statements about when integrals of the form
\begin{equation*}
    \int (x^d - b(d)) e^{-x^2}\ dx
\end{equation*}
are themselves hyperexponential. However, in contrast to the summation problem,
we have a solid understanding of \emph{all} elementary antiderivatives thanks
to Liouville's theorem \cite[ch.~12]{alg}, not just hyperexponential ones. Thus
this could be a less satisfying problem.

Finally, we note that the results here work essentially because the space of
polynomials $p(k)$ such that $p(k) z^k a^{\overline{k}}$ are Gosper summable
contains polynomials of every degree greater than or equal to $1$. More
complicated hypergeometric terms will produce spaces with degrees only $2$ or
greater, or $3$ or greater, and so on. In these cases, the basis could not be
simplified down to leading powers and constants, so the results would be about
terms of the form
\begin{equation*}
    (k^d - k c_1(d) - c_0(d)) f(k),
\end{equation*}
or
\begin{equation*}
    (k^d - k^2 c_2(d) - k c_1(d) - c_0(d)) f(k),
\end{equation*}
and so on. The techniques here would certainly apply to such terms, though the
results would be more difficult to state.

\end{document}